\documentclass[letterpaper, 10 pt, conference]{ieeeconf}

\IEEEoverridecommandlockouts   

\usepackage{amsmath,amssymb,amsfonts}
\usepackage{mathtools}
\usepackage{bm}
\usepackage{graphicx}
\usepackage{xcolor}
\usepackage[colorlinks=true,allcolors=blue]{hyperref}
\usepackage{cite}
\usepackage{comment}
\usepackage{dsfont}
\usepackage{booktabs}
\usepackage{mathrsfs}
\newtheorem{definition}{Definition}
\newtheorem{proposition}{Proposition}
\newtheorem{assumption}{Assumption}
\newtheorem{theorem}{Theorem}
\newtheorem{remark}{Remark}
\newtheorem{lem}{Lemma}

\renewcommand{\thesection}{\Roman{section}}
\renewcommand{\thesubsection}{\Alph{subsection}}

\newcommand{\loc}{L}                 % discrete mode counter
\newcommand{\cX}{\mathcal{X}}           % continuous state space
\newcommand{\cL}{\mathcal{L}}           % infinitesimal generator
\newcommand{\R}{\mathbb{R}}             % reals
\newcommand{\Prob}{\mathbb{P}}          % probability of the FSP event
\newcommand{\UST}{\mathrm{UST}}         % the sequential-safety predicate
\newcommand{\cF}{\mathcal{F}}
\DeclareMathOperator{\tr}{tr}

\newcommand{\trajj}{z_{t_0}^{T}}

\newcommand{\mnl}[1]{{\color{blue} Manuela: #1}}
\newcommand{\abhi}[1]{{\color{violet} Abhijit: #1}}

\title{\LARGE\bf Probabilistic Sequential Safety for Stochastic Dynamical Systems}

\author{Abhijit Mazumdar, Manuela L. Bujorianu and Rafal Wisniewski 
\thanks{Authors are with the Section of Automation $\&$ Control, Aalborg University, 9220 Aalborg East, Denmark (e-mail: \{abma, lmbu, raf\}@es.aau.dk).}
\thanks{This work has been supported by the Independent Research Fund Denmark (DFF) 
        under Project SAFEmig (Project Number 3105-00173B).}
}

\begin{document}
\maketitle
\thispagestyle{empty}
\pagestyle{empty}

\begin{abstract}
In this letter, we investigate a notion of probabilistic safety for a stochastic dynamical system. Standard safety notions are based on the idea that an unsafe region of the state-space must be avoided at each instant. In contrast, we study a \textbf{sequential safety} notion which is based on avoiding a tuple of target regions of the state-space. The proposed safety notion is inspired by many practical systems, ranging from modeling jailbreak attacks in large language models (LLMs) to predicting hazardous chemical reactions. In solving the sequential safety problem, we solve a \textbf{sequential reachability problem} for a stochastic dynamical system. 
\end{abstract}

\noindent\textbf{\textit{Index Terms}---} Sequential safety, sequential reachability, probabilistic safety, stochastic dynamical systems, Feynman-Kac representation.

\section{Introduction}

Safety verification of stochastic dynamical systems is a fundamental problem in
control theory~\cite{ames2017control, wongpiromsarn2011formal}. Traditional
safety notions concern a single distinguished region: trajectories must reach a
target while avoiding an unsafe
set~\cite{prajna2007framework, wisniewski2021safety, wisniewski2017stochastic}.
Many practical specifications, however, involve several regions and depend on
the pattern in which they are visited: either all of them are visited before a
deadline in \emph{any} order, or in a \emph{prescribed} one. A primary
motivating example is a large language model (LLM) under a multi-turn jailbreak
attack, where a sequence of individually innocuous prompts, each passing
conventional safety filters, progressively weakens the model's guardrails
before the final harmful request~\cite{russinovich2025great}; the attack succeeds
only if the prompts arrive in a specific order, so two conversations with the
same prompts in different orders can have fundamentally different outcomes.
Similarly, in chemical process safety, certain reaction pathways are hazardous
only when temperature and pressure setpoints are crossed in an adverse order.
Instantaneous safety metrics, which treat each moment independently, cannot
capture such temporal and ordinal structure.

\textit{Related Literature:}
%Several works have developed probabilistic {single-target} reachability methods specifically for stochastic differential equations and diffusion processes.
Stochastic reach-avoid problems for diffusions have been studied from an optimal-control perspective in \cite{esfahani2016stochastic}. A probabilistic safety notion called \(p\)-safety is introduced in ~\cite{wisniewski2020p} for stochastic hybrid processes, while \cite{wisniewski2023probabilistic} develops a probabilistic safety framework for Markov processes. Barrier-function and finite-abstraction approaches have provided alternative ways of obtaining probabilistic safety guarantees for stochastic systems~\cite{laurenti2025unifying}. In contrast to these certificate- and abstraction-based methods, our objective is to obtain the sequential reachability probabilities.
%They established a dynamic programming characterization of reach-avoid probabilities and derived corresponding PDE characterizations in terms of discontinuous viscosity solutions. Their formulation considers the probability of reaching a target before an avoid set, either before a prescribed horizon or at the terminal time. This work is particularly close to the present setting in that it connects probabilistic reachability of diffusions with boundary-value PDEs. Probabilistic safety has also been formulated for general Markov processes. 

Closely related is the literature on stochastic hybrid systems, where reachability has been studied for single targets \cite{bujorianu2008reachability,koutsoukos2006computational}, under state constraints \cite{bujorianu2011state}, and, in discrete time, as finite-horizon reachability and safety via dynamic programming~\cite{abate2008probabilistic,summers2010verification}.

In probabilistic model checking, multiphase-until formulas express sequential requirements and have been studied for continuous-time Markov chains and Markov reward models~\cite{xu2016multiphase}; unlike these finite-state settings, our state space is continuous and the dynamics are given by an Itô diffusion.
\textit{Our Contribution:} Existing work on stochastic safety of diffusions 
primarily addresses single-target reachability, reach-avoid, or safety specifications. This letter develops a framework for \emph{probabilistic sequential safety} for 
Itô diffusions. Our main contributions are:

\begin{enumerate}
  \item We introduce \emph{probabilistic sequential safety}, which characterizes 
        the probability that a stochastic path visits a tuple of target 
        regions $(U_1, \ldots, U_N)$ either in a prescribed sequence or in any order 
        before a horizon $T$.

  \item \emph{Ordered case:} We show that the probability of visiting the target 
        regions in a prescribed order is given by a chain of recursive coupled backward 
        Kolmogorov equations, one for each stage of the sequence.  The derivation relies on the strong Markov property and the 
        Feynman--Kac representation~\cite{friedman2008partial, oksendal2003}, and gives a PDE-based counterpart 
        to existing sequential probabilistic verification methods.

  \item \emph{Unordered case:} We extend the framework to the setting where all 
        target regions must be visited in any order. These are solved 
        backward, starting from the case where every region has been visited. 
        The number of equations grows exponentially, so this case costs more than 
        the ordered one. It still scales far better than the factorial growth of 
        enumerating every possible visiting order.
\end{enumerate}

The remainder of this letter is organized as follows. Section~II introduces the stochastic system and the standing assumptions. Section~III addresses the ordered case and Section~IV extends this characterization to the unordered case via a visit-set augmentation. Section~V illustrates the results on a longitudinal vehicle example, validating the PDE solution against Monte Carlo simulation, and Section~VI concludes with directions for future work.

%The remainder of this letter is organized as follows. In Section~II, we introduce the formal problem setup, including the mode-augmented Markov process and the definition of sequential-passage safety. In Section~III, we state the main theorem characterizing the safety probability via coupled backward Kolmogorov equations. In Section~IV, we discuss the numerical approach, and Section~V concludes with remarks on future directions.

\textit{Notations:} We use $\mathds{R}^n$ to denote the $n$-dimensional Euclidean space and $\mathds{R}^{n\times m}$ to denote $n\times m$ real matrices. For a vector $h\in \mathds{R}^n$, we use $h^T$ to denote its transpose and $\nabla h$ denotes the gradient of $h$. $C^2(\cX)$ represents twice continuously differentiable functions on $\cX$.  For a subset $A$ of a topological space, we use $\partial A$ to denote the boundary of $A$. For a square matrix $M$, we represent the trace by $\tr(M)$. $\mathbf{1}_{A}$ — indicator function of event $A$; equals $1$ when the event is realised, $0$ otherwise. The empty set is denoted by $\emptyset$. For a set $S$, $\mathcal{P}(S)$ denote the power set of $S$. For $a,b\in\R\cup\{+\infty\}$, $a\wedge b$ denotes their minimum.

\section{System Descriptions}
\label{sec:problem}

%\subsection{Stochastic Dynamics}

Let $(\Omega,\mathcal{F},\mathds{P})$ be a 
probability space carrying a standard $m$-dimensional Brownian motion
$W$. We consider the time-homogeneous diffusion It\^o process in a state space
$\cX \subseteq \R^n$,
\begin{equation}
    \mathrm{d}X_t = f\bigl(X_t\bigr)\,\mathrm{d}t
    + g\bigl(X_t\bigr)\,\mathrm{d}W_t,
    \qquad X_0=x_0,
    \label{eq:sde}
\end{equation}
with drift $f:\cX\to\R^n$ and diffusion coefficient
$g:\cX\to\R^{n\times m}$. We write $(X_t)_{t\geq 0}$ to represent a process. We use $\mathds{P}^{Z}$ to denote the probability measure conditioned on a random variable $Z$ and $\mathds{E}^{Z}$ to denote the expectation with respect to $\mathds{P}^{Z}$. For any process denoted by $(X_t)_{t\geq 0}$, we use uppercase $X_t$ to denote the random variable at time instant $t$, and $x_t$ is used to represent its realization.  Let $\mathcal{F}_t=\sigma(X_s:s\leq t)$ be the natural filtration of
$(X_t)$.
%We write $P^{x}$ for the probability kernel and expectation of the process conditioned on $X_t=x$. 
%We write $P_{t,x}$ and $E_{t,x}$ for the probability kernel and expectation of the process conditioned on $x(t)=x$, and abbreviate $P_x := P_{t,x}$.

We follow the following standing assumption throughout the paper.
%\begin{assumption}\label{ass:wellposed}
%$f$ and $g$ are locally Lipschitz and satisfy a linear-growth bound, so that~\eqref{eq:sde} admits a unique strong solution with continuous sample paths for all $t\ge t_0$.
%\end{assumption}
\begin{assumption}
\label{ass:wellposed}
$f$ and $g$ are locally Lipschitz and satisfy a linear-growth bound,
so that (1) admits a unique strong solution with continuous sample paths. Moreover, the
diffusion is uniformly elliptic: there is $\lambda>0$ such that
$\xi^\top g(x)g(x)^\top\xi \ge \lambda\|\xi\|^2$ for all $x\in\mathcal{X}$, $\xi\in\mathbb{R}^n$.
\end{assumption}

%\textit{Assumption 2:} The target sets $U_1,\dots,U_N$ are closed, have nonempty interior, and each finite union $\bigcup_{k\in K}U_k$, $K\subseteq\mathcal{N}$, satisfies the interior cone condition; equivalently, its complement satisfies the exterior cone condition, so that every boundary point is regular for (1).}
%\subsection{Sequential Reachability}
\begin{comment}
We are interested in a tuple of target regions
$U_1,\ldots,U_N \subseteq \cX$, each \textit{closed with a nonempty interior}, an
initial set $\cX_0\subseteq\cX$, and a finite time horizon $T$. Define the \emph{ordered hitting times}
\begin{equation}
    \tau_0 = t_0,\qquad
    \tau_k = \inf\bigl\{\, t > \tau_{k-1} : X_t\in U_k \,\bigr\},
    \quad k=1,\ldots,N,
    \label{eq:guard}
\end{equation}
By construction $U_k$ is
credited only after $U_{k-1}$ has been visited, so
$\tau_0<\tau_1<\cdots<\tau_N\leq T$. The associated mode-counter
$\loc_t\in \mathcal{N} := \{0,1,\ldots,N\}$ records how many guards have been
crossed in order,
\begin{equation}
    \loc_t = \max\bigl\{\, k\in \mathcal{N} : \tau_k \le t \,\bigr\},
    \qquad \loc(t_0)=0,
    \label{eq:mode-dyn}
\end{equation}
i.e. $\loc_t$ is nondecreasing, piecewise constant, and increments by one
exactly when the next region $U_{\loc+1}$ in the prescribed order is
entered. We use $(\loc_t)$ to denote the process generated by \eqref{eq:mode-dyn}.
\end{comment}

We are interested in a tuple of target regions $U_1,\ldots,U_N \subseteq \cX$, 
each a \textit{closed set with nonempty interior}, an initial set $\cX_0\subseteq\cX$, 
and a finite time horizon $T$. %Further, we assume that $U_1,\ldots,U_N \subseteq \cX$ are disjoint.

\begin{assumption}
\label{closed_targets_reg}
The target sets $U_1, \ldots, U_N$ are closed. %Since $X(t)$ is a continuous process on a closed set, the first hitting times are well-defined stopping times (by the strong Markov property of the Itô process).
\end{assumption}

For the original diffusion process \eqref{eq:sde}, the infinitesimal generator
acting on $C^2$ functions $u : \cX \to \R$ is
\begin{equation}
    (\cL u)(x) = f(x)^\top \nabla u(x)
    + \tfrac{1}{2}\,\tr\!\bigl(g(x) g(x)^\top \nabla^2 u(x)\bigr).
    \label{generator}
\end{equation}

%\section{Main Results}
%\label{sec:diffusion-ordered-safety}

%In this Section, we provide our main results. First we consider the ordered safety problem, and then we provide results for the unordered sequential safety problem. 
%\subsection{Per-Mode Value Functions}
\section{Ordered Sequential Safety}
\label{sec:diffusion-ordered-safety}
Define the \emph{ordered hitting times} as
\begin{equation}
    \tau_0 = 0,\qquad
    \tau_i = \inf\bigl\{\, t \geq \tau_{i-1} : X_t \in U_i \,\bigr\},
    \quad i=1,\ldots,N.
    \label{eq:guard}
\end{equation}
By construction, $U_i$ is credited only after $U_{i-1}$ has been visited, 
ensuring the strict ordering $\tau_0 \leq \tau_1 \leq \cdots \leq \tau_N \leq T$.

The associated \emph{discrete mode} $\loc_t \in \mathcal{N} := \{0,1,\ldots,N\}$ 
records how many target regions have been reached in the prescribed order:
\begin{equation}
    \loc_t = \max\bigl\{\, i \in \mathcal{N} : \tau_i \le t \,\bigr\},
    \qquad \loc_{0} = 0.
    \label{eq:mode-dyn}
\end{equation}
Equivalently, for all $t\in [0,T]$, $ s \in [0, t], \ i = 1, \ldots, N:$
\small
\begin{equation}
\begin{split}
    \loc_t  = & i \iff X_{\tau_i} \in U_i \text{ and } X_s \notin U_{i+1} \ \iff \tau_i \leq t < \tau _{i+1}
    \end{split}
    \label{eq:mode-membership}
\end{equation}
\normalsize

The process $(\loc_t)_{t \in [0, T]}$ is nondecreasing, piecewise constant, 
and increments by one at the (random) stopping time $\tau_{i+1}$ when the next 
region $U_{i+1}$ is entered. Since $U_{i+1}$ is closed, the mode switches instantaneously upon entering $U_{i+1}$.

We define a mode-augmented process which combines both the diffusion process $(X_t)_{t\geq 0}$ and mode-counter process $(\loc_t)_{t\geq 0}$ as follows. %In the next section, we prove that the mode-augmented process is a Markov process. 

\begin{definition}[Mode-augmented process]
The process $(A_t) := (X_t,\loc_t)_{t\geq 0}$ on $\cX\times L$ is called a mode-augmented process, where $(X_t)_{t\geq 0}$ and $(\loc_t)_{t\geq 0}$ are processes generated by \eqref{eq:sde} and \eqref{eq:mode-dyn}, respectively.
\end{definition}
Because the continuous dynamics~\eqref{eq:sde} are identical at every
mode $\loc_t \in \mathcal{N}$, the
generator of the augmented process $(A_t)$ acts mode-by-mode as a copy
of $\cL$, coupled across modes only through the boundary conditions
imposed by the unsafe regions. 

\begin{definition}[Probabilistic ordered-sequential safety]\label{def:prob-safe}
For a given initial state $x_0$ and a finite time-horizon $T$, we call system \eqref{eq:sde} to be \emph{probabilistically ordered-sequentially safe} with respect to a prescribed threshold $p \in [0,1]$ if
\begin{equation}\label{eq:safety-cond}
   % \Prob^{x_0}\bigl[\UST(\trajj)\bigr] \;\leq\; p.
  \Prob^{x_0}\bigl[ \tau_N \leq T\bigr] \;\leq\; p.
\end{equation}
\end{definition}

The verification problem addressed in the ordered sequential specification case is to compute the probability $\Prob^{x_0}\bigl[ \tau_N \leq T\bigr]$. %The next section reduces this to a coupled system of boundary value problems for the mode-augmented process $(A_t)_{t\geq 0}$.
\begin{comment}
\begin{remark}
    The practical significance of the sequential
specification is that it is strictly harder to satisfy than any of its constituent
requirements, and is therefore assigned a strictly smaller probability. Existing stochastic safety verification methods, which certify a single unsafe
region at a time or, equivalently, treat the collection $\bigcup_i U_i$ as one
unsafe set, therefore return upper bounds on the quantity of interest.
\end{remark}
\end{comment}
\begin{remark}
Every trajectory that completes the sequence also enters each target set
individually. Methods that analyse each unsafe set separately, or treat their
union as a single unsafe set, therefore only provide upper bounds on the
sequential probability. %These bounds are exact only when the ordering is forced, for instance when one target set contains the other, and can be very conservative otherwise.
\end{remark}

For each mode $i \in \{0, 1, \ldots, N\}$, define the value function
$v_i:[t_0,T]\times\cX\to[0,1]$ by
\small 
\begin{equation}
\begin{aligned}
     v_i(t,x) & := \mathds{P}\bigl[\loc_T = N \mid \loc_t=i,\ X_t=x \bigr] \\
    &= \mathds{P}\bigl[  \tau_{N}
       \leq T\mid \loc_t=i, \ X_t=x\bigr].
\end{aligned}
    \label{eq:vi-def}
\end{equation} 
\normalsize
The value function $v_i(t,x)$ is the probability
that, starting from state $x$ at time $t$ in mode $i$, the trajectory
completes the remaining sequence of regions $U_{i+1}, \ldots, U_N$ in order
by time $T$. Since $\loc_t$ records the regions reached up to and including
time $t$, a trajectory started in mode $i$ at a point $x\in U_{i+1}$ switches
to mode $i+1$ at time $t$; hence $v_i(t,x)=v_{i+1}(t,x)$ for
$x\in U_{i+1}$. 

For $x_0 \in \cX_0$, the original safety probability is:
\small
\begin{equation}
    \Prob^{x_0}[\tau_N\leq T] = \Prob \big[\loc_T=N\mid X_0=x_0\big] = v_0(0,x_{0}).
    \label{eq:fsp-as-v0}
\end{equation}
\normalsize
So the ordered-sequential safety verification problem reduces to
computing $v_0$ on the initial set $\cX_0$ and checking
$v_0(0,x_0) \le p$ for all $x_0 \in \cX_0$.
\begin{comment}
\mnl{Here, my intuition says that we can introduce an auxiliary result (lemma) to give a representation $ v_i(t,x)$, which will be used in the proof of Th1 (you have it already in the proof). The most important value of this value function is when $x$ belongs to $U_i$, i.e. it corresponds to the hitting time. }
\end{comment}
\par Let $\mathcal{G}_t=\sigma(A_s:s\leq t)$ be that of the augmented process. Since $L_t$ is a deterministic functional of the
trajectory $(X_s)_{s\leq t}$, we have $\mathcal{F}_t\subseteq\mathcal{G}_t$. 
\begin{lem}\label{lem:filtrations}
For all $t\ge 0$, $\mathcal{G}_t=\mathcal{F}_t$.
\end{lem}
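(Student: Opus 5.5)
The lemma asks for two inclusions. The inclusion $\mathcal{G}_t\subseteq\mathcal{F}_t$ already follows from $\mathcal{F}_t\subseteq\mathcal{G}_t$ together with the observation that $\loc_t$ is determined by the path up to time $t$. Taken literally, though, that observation only gives the opposite containment. So the plan is to prove both inclusions, and the real work is showing that $\loc_s$ is $\mathcal{F}_s$-measurable for every $s\le t$.

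\textbf{The easy inclusion.} Since $A_s=(X_s,\loc_s)$ has $X_s$ as a coordinate, each $X_s$ is $\sigma(A_r:r\le t)$-measurable for $s\le t$. Hence $\mathcal{F}_t\subseteq\mathcal{G}_t$.

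\textbf{The reverse inclusion.} By the generating-set argument, $\mathcal{G}_t\subseteq\mathcal{F}_t$ holds as soon as each $A_s$ with $s\le t$ is $\mathcal{F}_t$-measurable. As $X_s$ is already $\mathcal{F}_s\subseteq\mathcal{F}_t$-measurable, it suffices that $\loc_s$ is $\mathcal{F}_s$-measurable. By \eqref{eq:mode-dyn},
\[
\{\loc_s\ge i\}=\{\tau_i\le s\}.
\]
So it is enough to show that each $\tau_i$ is an $(\mathcal{F}_t)$-stopping time in the sense $\{\tau_i\le s\}\in\mathcal{F}_s$. I will prove this by induction on $i$.

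\textbf{Base case.} $\tau_0=0$ is trivially a stopping time.

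\textbf{Inductive step.} Assume $\tau_{i-1}$ is a stopping time. By \eqref{eq:guard}, $\tau_i$ is the first hitting time of the closed set $U_i$ after $\tau_{i-1}$. Because $X$ has continuous paths and $U_i$ is closed (Assumption~\ref{closed_targets_reg}), the set $\{r\in[\tau_{i-1},s]:X_r\in U_i\}$ is closed whenever it is nonempty, so the infimum is attained. This gives
\[
\{\tau_i\le s\}=\{\tau_{i-1}\le s\}\cap\Bigl\{\inf_{r\in[\tau_{i-1},s]} d(X_r,U_i)=0\Bigr\}.
\]
To show this event lies in $\mathcal{F}_s$, approximate $\tau_{i-1}$ by dyadic or rational points. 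By continuity of $r\mapsto d(X_r,U_i)$, the event equals
\[
\bigcap_{n}\ \bigcup_{q\in\mathbb{Q}\cap[0,s]\cup\{s\}}\Bigl\{\tau_{i-1}\le q,\ d(X_q,U_i)<1/n\Bigr\}.
\]
This needs care at the left endpoint. Points $q$ slightly below $\tau_{i-1}$ must be excluded, and continuity together with closedness handle the limit $q\downarrow\tau_{i-1}$. Each set in the union is in $\mathcal{F}_s$, because $\{\tau_{i-1}\le q\}\in\mathcal{F}_q\subseteq\mathcal{F}_s$ and $X_q$ is $\mathcal{F}_s$-measurable. Therefore $\{\tau_i\le s\}\in\mathcal{F}_s$.

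\textbf{Conclusion.} Each $\loc_s=\sum_{i\ge1}\mathbf{1}_{\{\tau_i\le s\}}$ is $\mathcal{F}_s$-measurable, and so $\mathcal{G}_t\subseteq\mathcal{F}_t$.

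\textbf{Main obstacle.} The delicate step is the countable representation of $\{\tau_i\le s\}$ when the hitting window starts at a random time $\tau_{i-1}$. One has to check that restricting to rational $q\ge\tau_{i-1}$ loses nothing, including the case where $X_{\tau_{i-1}}\in U_i$ already, so that $\tau_i=\tau_{i-1}$. The fallback is to use $\{\tau_i\le s\}=\{\tau_{i-1}\le s\}\cap\{\exists r\in[\tau_{i-1},s]:X_r\in U_i\}$ together with the standard fact that the début of a closed set after a stopping time, for a continuous adapted process, is a stopping time of the natural filtration. For closed sets this holds without any augmentation of the filtration.
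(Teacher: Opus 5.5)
Your proof is correct and has the same skeleton as the paper's: the trivial inclusion $\mathcal{F}_t\subseteq\mathcal{G}_t$, then the reduction via $L_t=\sum_i\mathbf{1}_{\{\tau_i\le t\}}$ to showing that each $\tau_i$ is a stopping time, then induction on $i$. The inductive step, however, is handled differently and more carefully. The paper asserts $\{\tau_i<t\}\in\mathcal{F}_t$ and upgrades this to $\{\tau_i\le t\}\in\mathcal{F}_t$ by appealing to right-continuity of $\mathcal{F}_t$. But $\mathcal{F}_t=\sigma(X_s:s\le t)$ is the raw natural filtration, which is not right-continuous in general, so that step tacitly requires passing to the usual augmentation. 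Your representation
\[
\{\tau_i\le s\}=\bigcap_{n}\bigcup_{q\in(\mathbb{Q}\cap[0,s])\cup\{s\}}\bigl\{\tau_{i-1}\le q,\ d(X_q,U_i)<1/n\bigr\}
\]
uses closedness of $U_i$ and path continuity to get $\{\tau_i\le s\}\in\mathcal{F}_s$ directly. It therefore proves the lemma exactly as stated, with no hypothesis on the filtration. For a closed target the non-strict event is the natural one, and the strict event then follows as a countable union.

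The only cost is a short pathwise check, which you should write out rather than flag as an obstacle. For ``$\subseteq$'': if $X_r\in U_i$ with $r\in[\tau_{i-1},s]$, take $q=s$ when $r=s$, and otherwise rationals $q\in(\tau_{i-1},s)$ close to $r$; this covers $r=\tau_{i-1}$ by approaching from the right. For ``$\supseteq$'': the chosen $q_n$ lie in the compact interval $[\tau_{i-1},s]$, so a subsequence converges to some $q^*$ there, and continuity plus closedness give $X_{q^*}\in U_i$, hence $\tau_i\le q^*\le s$. The constraint $\tau_{i-1}\le q$ already excludes points below $\tau_{i-1}$, so the fallback is unnecessary. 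Your opening paragraph is muddled about which inclusion follows from what and can simply be deleted; the rest of the argument handles both directions correctly.
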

\begin{proof}
Since $\mathcal{G}_t$ is generated by $(X_s,L_s)_{s\le t}$ and $\mathcal{F}_t$
by $(X_s)_{s\le t}$ alone, $\mathcal{F}_t\subseteq\mathcal{G}_t$.

For the converse, it suffices to show that $L_t$ is $\mathcal{F}_t$-measurable
for every $t$. Since $\tau_1\le\tau_2\le\cdots\le\tau_N$ by \eqref{eq:guard} gives
$L_t=\sum_{i=1}^N\mathbf{1}_{\{\tau_i\le t\}}$, so it suffices that each
$\tau_i$ is an $\mathcal{F}_t$-stopping time. We argue by induction on $i$.
Clearly $\tau_0=0$ is a stopping time. Suppose $\tau_{i-1}$ is a stopping
time. By \eqref{eq:guard}, $\tau_i$ is the first entrance of $(X_t)$ into $U_i$ after
$\tau_{i-1}$. Since $\tau_{i-1}$ is a stopping time, $U_i$ is closed and
$(X_t)$ has continuous paths, whether this entrance has occurred before time
$t$ is determined by $(X_s)_{s\le t}$, i.e., $\{\tau_i<t\}\in\mathcal{F}_t$
for every $t$. By right-continuity of $\mathcal{F}_t$, also
$\{\tau_i\le t\}\in\mathcal{F}_t$, so $\tau_i$ is a stopping time. Thus $\mathcal{G}_t\subseteq \mathcal{F}_t$. 
%\par Therefore, $\mathcal{G}_t=\mathcal{F}_t$.
\end{proof}

\begin{comment}
\begin{proof}
Since $\mathcal{G}_t$ is generated by $(X_s,L_s)_{s\le t}$ and $\mathcal{F}_t$
by $(X_s)_{s\le t}$ alone, $\mathcal{F}_t\subseteq\mathcal{G}_t$.

For the converse, it suffices to show that $L_t$ is $\mathcal{F}_t$-measurable
for every $t$. By (3), $L_t=\sum_{i=1}^N\mathbf{1}_{\{\tau_i\le t\}}$, so it
suffices that each $\tau_i$ is an $\mathcal{F}_t$-stopping time. We argue by
induction on $i$. Clearly $\tau_0=t_0$ is a stopping time. Suppose
$\tau_{i-1}$ is a stopping time. By (2), $\tau_i$ is the first entrance of
$(X_t)$ into $U_i$ after $\tau_{i-1}$. Since $U_i$ is closed and $(X_t)$ has
continuous paths, whether this entrance has occurred before time $t$ is
determined by $(X_t)$ on $[t_0,t]$, so
$\tau_i$ is $\in\mathcal{F}_t$-measurable. Thus $\tau_i$ is a stopping time.

Hence $A_t=(X_t,L_t)$ is $\mathcal{F}_t$-measurable for every $t$, and
therefore $\mathcal{G}_t\subseteq\mathcal{F}_t$.
\end{proof}
\end{comment}

The following result establishes the strong Markov property of the mode-augmented process $(A_t)_{t\geq 0}$. The strong Markov property says the memoryless property holds not only at deterministic times $t$ but also at stopping times. We use it to derive our main result. %\mnl{Explain in words that strong MP means the Markov property wrt hitting times.} \author{This property is essential for all the results for Continuous Markov chain. Mention it.}

\begin{lem}[Mode-augmented Markov property]\label{stron_markov_lem}
The augmented process $(A_t)_{t\ge 0}=(X_t,L_t)_{t\ge 0}$ on
$\mathcal{X}\times\mathcal{N}$ is a strong Markov process with respect to
$\mathcal{G}_t$.
\end{lem}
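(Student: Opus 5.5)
Because $X$ is a time-homogeneous Itô diffusion with locally Lipschitz, linear-growth coefficients (Assumption~\ref{ass:wellposed}), it is a strong Markov process with continuous paths with respect to its natural filtration $\mathcal{F}_t$; we take $\mathcal{F}_t$ to be the right-continuous, augmented version, as was already done in the proof of Lemma~\ref{lem:filtrations}. The plan is to exhibit $A_t=(X_t,L_t)$ as a measurable \emph{shift-compatible} functional of the path of $X$ and then transfer the strong Markov property of $X$ to $A$. By Lemma~\ref{lem:filtrations}, $\mathcal{G}_t=\mathcal{F}_t$, so $\mathcal{G}$-stopping times and $\mathcal{F}$-stopping times coincide. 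It therefore suffices to show the following: for every $\mathcal{F}$-stopping time $\sigma$, every $s\ge 0$ and every bounded measurable $h$ on $\cX\times\mathcal{N}$,
\[
\mathbb{E}\bigl[h(A_{\sigma+s})\mid\mathcal{F}_\sigma\bigr]=\Phi_s h(X_\sigma,L_\sigma)
\]
on $\{\sigma<\infty\}$, for a kernel $\Phi_s$ that does not depend on $\sigma$.

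The key step is a \emph{restart identity} for the mode counter. For $j\in\mathcal{N}$ and a continuous path $\omega$, define $\ell^{(j)}_s(\omega)$ as the number of additional ordered entries made by $\omega$ on $[0,s]$ into $U_{j+1},U_{j+2},\dots,U_N$, counted in order and starting from mode $j$. Concretely, set $\rho_j=0$ and $\rho_k=\inf\{u\ge\rho_{k-1}:\omega(u)\in U_k\}$ for $k>j$, and put $\ell^{(j)}_s=j+\#\{k>j:\rho_k\le s\}$. We will then verify pathwise, directly from \eqref{eq:guard} and \eqref{eq:mode-dyn}, that
\[
L_{\sigma+s}=\ell^{(L_\sigma)}_s(X\circ\theta_\sigma),
\]
where $\theta$ is the shift operator. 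The verification runs as follows. If $L_\sigma=j$, then $\tau_j\le\sigma<\tau_{j+1}$, so the next ordered hitting time after $\sigma$ is the first entrance into $U_{j+1}$ after $\sigma$, and the subsequent ones follow recursively. One boundary case needs care: when $X_\sigma\in U_{j+1}$, the inequality $\ge$ in \eqref{eq:guard} together with closedness of $U_{j+1}$ (Assumption~\ref{closed_targets_reg}) gives $\tau_{j+1}=\sigma$, which contradicts $L_\sigma=j$. The mode at $\sigma$ therefore already accounts for every entrance realized at time $\sigma$, and the restarted count is consistent. We also need joint measurability of $(j,\omega,s)\mapsto\ell^{(j)}_s(\omega)$; this follows from the stopping-time argument of Lemma~\ref{lem:filtrations}.

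With the restart identity in hand, $h(A_{\sigma+s})=H(L_\sigma,X\circ\theta_\sigma)$ with $H(j,\omega)=h(\omega(s),\ell^{(j)}_s(\omega))$. Since $L_\sigma$ is $\mathcal{F}_\sigma$-measurable and takes finitely many values, we decompose $H(L_\sigma,\cdot)=\sum_j \mathbf{1}_{\{L_\sigma=j\}}H(j,\cdot)$. The strong Markov property of $X$ then gives
\[
\mathbb{E}\bigl[h(A_{\sigma+s})\mid\mathcal{F}_\sigma\bigr]=\sum_j \mathbf{1}_{\{L_\sigma=j\}}\,\mathbb{E}^{X_\sigma}\bigl[H(j,X)\bigr]=\Phi_s h(X_\sigma,L_\sigma),
\]
with $\Phi_s h(x,j):=\mathbb{E}^{x}\bigl[h(X_s,\ell^{(j)}_s(X))\bigr]$. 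Taking $\sigma$ deterministic recovers the simple Markov property. The semigroup (Chapman--Kolmogorov) property of $\Phi_s$ follows by applying the same identity at $\sigma=t$.

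The main obstacle is the restart identity, specifically the boundary case $X_\sigma\in U_{j+1}$ and the possibility of several ordered entries happening at the same instant. That possibility arises when $X_\sigma\in U_{j+1}\cap U_{j+2}$, since the hitting times in \eqref{eq:guard} use $\ge$ and consecutive $\tau$'s may coincide. We will first handle this carefully by showing that $\ell^{(j)}$ evaluated at $s=0$ reproduces exactly the cascade of coincident hitting times already absorbed into $L_\sigma$. If that proves delicate, the fallback is to define the state space of $A$ so that $(x,j)$ is admissible only when $x\notin U_{j+1}$ or $j=N$, which makes the restart trivially consistent.
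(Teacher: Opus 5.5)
Your proposal is correct and follows the paper's proof. Like the paper, you use Lemma~\ref{lem:filtrations} to get $\mathcal{G}_t=\mathcal{F}_t$, write the future of $A$ on each event $\{L_\sigma=j\}$ as a fixed functional of the shifted path $X\circ\theta_\sigma$, apply the strong Markov property of $X$, and sum over $j$; your restart identity $L_{\sigma+s}=\ell^{(L_\sigma)}_s(X\circ\theta_\sigma)$ spells out what the paper only asserts in words, and your observation that $X_\sigma\notin U_{j+1}$ on $\{L_\sigma=j\}$ already disposes of the coincident-entry worry at time $\sigma$, so restricting to admissible pairs matters only for having $\Phi_0 h=h$ at states $(x,j)$ with $x\in U_{j+1}$, which the process never occupies.
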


\begin{proof}
Let $\tau$ be a $\mathcal{G}_t$-stopping time. By
Lemma~\ref{lem:filtrations}, $\tau$ is also an $\mathcal{F}_t$-stopping time
and $\mathcal{G}_\tau=\mathcal{F}_\tau$. By the strong Markov property of the
process $(X_t)_{t\geq 0}$, on $\{\tau<\infty\}$ and conditional on $\mathcal{F}_\tau$,
the future trajectory $(X_{\tau+s})_{s\ge0}$ has the law of the diffusion
started from $X_\tau$.

The future mode-augmented process is obtained deterministically from $(X_{\tau+s})_{s\ge0}$ and the current mode $L_\tau$. Hence, on each event $\{L_\tau=i\}$,
$i\in\mathcal{N}$, which belongs to $\mathcal{G}_\tau$, the future
$(A_{\tau+s})_{s\ge0}$ is a fixed measurable functional of
$(X_{\tau+s})_{s\ge0}$. Applying the strong Markov property of $(X_t)$ on
each such event and summing over $i\in\mathcal{N}$ gives, for every bounded
measurable $h:\mathcal{X}\times\mathcal{N}\to\mathbb{R}$ and every $t\ge0$,
\[
\mathbb{E}\big[h(A_{\tau+t})\mid\mathcal{G}_\tau\big]
=\mathbb{E}^{A_\tau}\big[h(A_t)\big]\quad\text{on }\{\tau<\infty\}.
\]
Thus $(A_t)_{t\ge 0}$ is a strong Markov process.
\end{proof}

\begin{theorem}
\label{thm1}
For $i = N-1, N-2, \ldots, 0$, the value function $v_i(t,x)$ defined on 
$x \in \cX \setminus U_{i+1}$ (the region where mode $i$ is active) 
is the unique solution of the following terminal-boundary-value problem:
\small
\begin{equation}
    \begin{cases}
      \dfrac{\partial v_i(t,x)}{\partial t} + \cL v_i(t,x) = 0,
            & (t,x) \in [0,T) \times (\cX \setminus {U_{i+1}}), \\[0.5em]
        v_i(t,x) = v_{i+1}(t,x), 
        & x \in \partial U_{i+1},\ t \in [0,T], \\
        v_i(T,x) = 0, & x \in \cX \setminus U_{i+1}, \\[0.5em]
         v_N(t,x) = 1, &  \text{for all } (t,x) \in [0,T]\times\cX.
    \end{cases}
    \label{eq:pde-system}
\end{equation}
\normalsize
The sequential safety probability is given by
\begin{equation*}
    \Prob^{x_0}\bigl[\tau_N \leq T\bigr] = v_0(0,x_0), 
    \qquad x_0 \in \cX_0.
    %\label{eq:safety-prob}
\end{equation*}
\end{theorem}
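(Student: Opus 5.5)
We argue by backward induction on the mode index, starting from $v_N\equiv 1$. This is immediate from \eqref{eq:vi-def}: if $\loc_t=N$ then $\tau_N\le t\le T$.

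Suppose $v_{i+1}$ has already been identified as a bounded function, continuous on $[0,T]\times\cX$. Fix $(t,x)$ with $x\in\cX\setminus U_{i+1}$, let the process be in mode $i$ at time $t$, and define
\[
\sigma:=\inf\{s\ge t: X_s\in U_{i+1}\}\wedge T.
\]
This is a stopping time by the argument of Lemma~\ref{lem:filtrations}. On $\{\sigma<T\}$, or on $\{\sigma=T,\ X_T\in U_{i+1}\}$, the mode jumps to $i+1$ at time $\sigma$. Otherwise mode $i$ persists until $T$, so $\loc_T=i<N$.

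Applying the strong Markov property of $(A_t)$ (Lemma~\ref{stron_markov_lem}) at $\sigma$ gives the representation
\[
v_i(t,x)=\mathbb{E}^{t,x}\bigl[v_{i+1}(\sigma,X_\sigma)\mathbf{1}_{\{X_\sigma\in U_{i+1}\}}\bigr].
\]
This is a Dirichlet problem on the space-time cylinder $[0,T)\times(\cX\setminus U_{i+1})$. The boundary data are $v_{i+1}$ on the lateral boundary $\partial U_{i+1}$, where $X_\sigma\in\partial U_{i+1}$ by path continuity, and $0$ at the terminal time. This stochastic representation is the core of the proof.

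Next we show that this expectation solves \eqref{eq:pde-system}. We would invoke the standard theory of parabolic boundary value problems for uniformly elliptic operators \cite{friedman2008partial}. Under Assumption~\ref{ass:wellposed}, with $f,g$ locally Lipschitz and continuous boundary data, the problem
\[
\partial_t u+\cL u=0 \ \text{in the cylinder},\qquad u=v_{i+1} \ \text{on } \partial U_{i+1},\qquad u(T,\cdot)=0,
\]
admits a classical solution $u\in C^{1,2}$ in the interior, continuous up to regular boundary points. For uniqueness and identification, apply It\^o's formula to $u(s,X_s)$ on $[t,\sigma]$. Localize with exit times from bounded sets and use the linear-growth bound for integrability. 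The $ds$-term vanishes by the PDE, and the stochastic integral is a martingale after localization. Taking expectations gives $u(t,x)=\mathbb{E}^{t,x}[u(\sigma,X_\sigma)]$, which equals the Feynman--Kac representation above. Hence $u=v_i$, and every bounded solution coincides with this expectation, which is the uniqueness claim. The boundary condition $v_i=v_{i+1}$ on $\partial U_{i+1}$ also matches the observation after \eqref{eq:vi-def} that mode $i$ switches instantly inside $U_{i+1}$.

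The induction needs $v_{i+1}$ continuous to serve as boundary data at the next step. This holds for $v_N$. For general $i$ it follows from the continuity of $u$ up to the boundary. The discontinuity of the data at the corner $\{T\}\times\partial U_{i+1}$, where $v_{i+1}(T,\cdot)$ may be $1$ while the terminal datum is $0$, has to be handled by restricting $t<T$ or by working with bounded solutions continuous away from that corner.

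The main obstacle is \textbf{boundary regularity}. We must show that $\mathbb{P}^{x}[\sigma=t]=1$ for $x\in\partial U_{i+1}$, so that the representation attains its boundary values continuously. Without this, uniqueness and continuity can fail, for example at irregular points of a thin closed set. We would first try to use the uniform ellipticity in Assumption~\ref{ass:wellposed} together with a geometric condition on $U_{i+1}$, namely an exterior cone condition on the complement, or equivalently an interior cone at each boundary point. Under such a condition every boundary point is regular (Zaremba's cone criterion), and the standard results then apply. If the paper's hypotheses (closed sets with nonempty interior) are insufficient, this cone condition is the extra assumption we would add.

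The final formula $\Prob^{x_0}[\tau_N\le T]=v_0(0,x_0)$ then follows from \eqref{eq:fsp-as-v0}. If $x_0\in U_1$, the mode has already advanced at time $0$, and $v_0(0,x_0)=v_1(0,x_0)$ by the same convention.
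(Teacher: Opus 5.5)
Your proposal follows the paper's proof. It uses backward induction from $v_N\equiv 1$, a strong-Markov restart at the first entrance into $U_{i+1}$ giving $v_i(t,x)=\mathbb{E}\bigl[v_{i+1}(\tau_{i+1},X_{\tau_{i+1}})\mathbf{1}_{\{\tau_{i+1}\le T\}}\bigr]$, and identification with the terminal--boundary value problem through Feynman--Kac for the exit problem; you carry out that last step as an explicit It\^o verification argument, whereas the paper simply cites it.

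Your boundary-regularity caveat is a genuine point, not a weakness of your argument. The paper asserts uniqueness, and avoidance of the corner $\{T\}\times\partial U_{i+1}$, without addressing irregular boundary points, so the classical statement needs a condition such as your cone condition (or a generalized notion of solution). One correction: your induction hypothesis should be continuity on $[0,T)\times\cX$, not $[0,T]\times\cX$, because already $v_{N-1}(T,\cdot)=\mathbf{1}_{U_N}$ jumps across $\partial U_N$.
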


\begin{comment}
\begin{theorem}
\label{thm1}
The value function $v_i(t,x)$ for $x \in \cX \setminus \operatorname{cl}({U_{i+1}})$, $i = N-1, N-2, \ldots, 0$, is the solution of the linear terminal-boundary-value problem
\begin{equation}
    \begin{cases}
      \frac{\partial v_i(t,x)}{\partial t}   + \cL v_i(t,x) = 0,
            & (t,x) \in [t_0,T) \times (\cX\setminus U_{i+1}), \\
        v_i(t,x) = v_{i+1}(t,x), & x \in \partial U_{i+1},\ t\in[t_0,T], \\
        v_i(T,x) = 0, & x \notin U_{i+1}, \\
         v_N(t,x) = 1 &  \text{for all }
    (t,x) \in [t_0,T]\times\cX.
    \end{cases}
    \label{eq:pde-system}
\end{equation}
The sequential safety probability is given by
\begin{equation}
    \Prob^{x_0}\bigl[\tau_N\leq T\bigr] = v_0(t_0,x_0), \qquad x_0\in\cX_0.
\end{equation}
\end{theorem}
\end{comment}

\begin{proof}
For each mode $i\in\{0,\dots,N\}$, let
\begin{equation}
  v_i(t,x)=\Prob \bigl[\loc_T=N \,\big|\, \loc_t=i, \ X_t=x\bigr]
  \label{eq:vi}
\end{equation}
be the probability of completing the ordered crossings $U_{i+1},\dots,U_N$
by the horizon $T$, started from $(t,x)$ in mode $i$. Since $\loc_t$ records
the regions reached up to and including time $t$, a trajectory started in
mode $i$ at a point $x\in U_{i+1}$ switches to mode $i+1$ at time $t$.

If $\loc_t=N$ the specification is already met, so
$v_N(t,x)=1$ on $[0,T]\times\cX$. This is the last condition in the theorem.

Now consider $i \in \{0,1,\dots,N-1\}$. If $x\in U_{i+1}$, the mode switches
to $i+1$ at time $t$, so starting from $(t,x)$ in mode $i$ and in mode $i+1$
describe the same situation, and
\begin{equation}
  v_i(t,x)=v_{i+1}(t,x),\qquad x\in U_{i+1}.
  \label{eq:switch}
\end{equation}
For stage $i$ itself only $x\in\partial U_{i+1}$ matters, since paths in
mode $i$ enter $U_{i+1}$ through its boundary; this gives the second line
of \eqref{eq:pde-system}.
%Relation \eqref{eq:switch} is nevertheless stated on all of $U_{i+1}$ because the target regions need not be disjoint: a point of $\partial U_{i}$ may lie in the interior of $U_{i+1}$, where the stage-$i$ boundary data $v_{i}$ is not determined by the PDE for $v_{i}$ but by \eqref{eq:switch}. 
In particular, \eqref{eq:switch} holds on $\partial U_{i+1}$, which is the
second line of \eqref{eq:pde-system}.

Next let $x\notin U_{i+1}$. On $\{\loc_t=i\}$ the trajectory has not entered
$U_{i+1}$ after $\tau_i$, so $\tau_{i+1}$ is the first entrance of
$(X_s)_{s\ge t}$ into $U_{i+1}$. Since $U_{i+1}$ is closed and $(X_t)$ has
continuous paths, $X_{\tau_{i+1}}\in\partial U_{i+1}$ on
$\{\tau_{i+1}<\infty\}$: a trajectory starting outside $U_{i+1}$ enters it
through its boundary. Since $\loc_T=N$ requires $U_{i+1}$ to be reached by
time $T$,
\small
\begin{align}
\begin{split}
  &v_i(t,x)
  = \mathds{E}\Bigl[\,\mathbf{1}_{\{\tau_{i+1} \le T\}}\,
     \cdot \mathds{P}\left[\loc_T=N \mid \mathcal{G}_{\tau_{i+1}} \right]\, \mid X_t=x, \loc_t=i\Bigr].
  \label{eq:decompose}
\end{split}
\end{align}
\normalsize
At $\tau_{i+1}$ the mode switches to $i+1$. If $X_{\tau_{i+1}}$ also lies in
$U_{i+2},\dots$, it switches further at the same instant, which by
\eqref{eq:switch} applied to the subsequent stages does not change the value.
Applying the strong Markov property of $(A_t)_{t\geq 0}$ from
Lemma~\ref{stron_markov_lem}, on $\{\tau_{i+1}\le T\}$ the conditional success
probability given $\mathcal{G}_{\tau_{i+1}}$ equals the success probability for
stage $i+1$ started at the restarted state $X_{\tau_{i+1}}$:
\begin{align}
  \mathds{P}\left[\loc_T=N \mid \mathcal{G}_{\tau_{i+1}}\right]&= v_{i+1}\bigl(\tau_{i+1}, X_{\tau_{i+1}}\bigr).
  \label{eq:markov-restart}
\end{align}
Substituting back into \eqref{eq:decompose}:
\begin{equation}
  v_i(t,x)
  = \mathds{E} \bigl[\,v_{i+1}\bigl(\tau_{i+1}, X_{\tau_{i+1}}\bigr)\,
     \mathbf{1}_{\{\tau_{i+1} \le T\}}\, \mid X_t=x, \loc_t=i\bigr].
  \label{eq:onestep}
\end{equation}
Since $X_{\tau_{i+1}}\in\partial U_{i+1}$, \eqref{eq:onestep} uses the values
of $v_{i+1}$ on $\partial U_{i+1}$ only. This is why the boundary condition
in \eqref{eq:pde-system} is stated on $\partial U_{i+1}$ rather than on all
of $U_{i+1}$.

If $x\notin U_{i+1}$ and $t=T$, then $X_T=x$ lies outside the closed set
$U_{i+1}$, so $\tau_{i+1}>T$ and \eqref{eq:onestep} gives $v_i(T,x)=0$. This
is the third condition of \eqref{eq:pde-system}.

To obtain the PDE, let $D_i:=\cX\setminus U_{i+1}$. On $\{\loc_t=i\}$ with
$x\in D_i$, $\tau_{i+1}$ is the first exit time of $(X_s)_{s\ge t}$ from
$D_i$. Define the boundary data
\begin{equation}
  \psi_i(s,y):=
  \begin{cases}
    v_{i+1}(s,y), & s\in[0,T],\ y\in\partial U_{i+1},\\[2pt]
    0,            & s=T,\ y\in D_i .
  \end{cases}
  \label{eq:psi}
\end{equation}
On $\{\tau_{i+1}\le T\}$ the stopped state is $(\tau_{i+1},X_{\tau_{i+1}})$
with $X_{\tau_{i+1}}\in\partial U_{i+1}$. On $\{\tau_{i+1}>T\}$ it is
$(T,X_T)$ with $X_T\in D_i$, where the integrand of \eqref{eq:onestep}
vanishes. Hence \eqref{eq:onestep} can be written as the exit-time
expectation
\begin{equation}
  v_i(t,x)=\mathds{E}\bigl[\psi_i\bigl(\tau_{i+1}\wedge T,\,
           X_{\tau_{i+1}\wedge T}\bigr)\mid X_t=x,\ \loc_t=i\bigr].
  \label{eq:exit-rep}
\end{equation}
Under Assumption~\ref{ass:wellposed} the diffusion is non-degenerate, so
$\Prob[\tau_{i+1}=T\mid X_t=x,\loc_t=i]=0$ for $t<T$. The stopped state
therefore almost surely avoids the corner $\{T\}\times\partial U_{i+1}$,
where $\psi_i$ may be discontinuous. By the Feynman--Kac representation
for diffusions stopped at the exit from a domain
\cite{lawler2010stochastic}, the function \eqref{eq:exit-rep} is the unique
bounded solution of the terminal--boundary value problem
\small
\begin{equation}
\left\{
\begin{aligned}
  &\frac{\partial v_i}{\partial t}+\mathcal{L}v_i=0,
     && (t,x)\in[0,T)\times D_i,\\
  &v_i(t,x)=v_{i+1}(t,x), && t\in[0,T],\ x\in\partial U_{i+1},\\
  &v_i(T,x)=0, && x\in D_i,
\end{aligned}
\right.
\label{eq:fk}
\end{equation}
\normalsize
for $i=N-1,N-2,\dots,0$. The boundary and terminal data in \eqref{eq:fk} are
the values of $\psi_i$ in \eqref{eq:psi}, which coincide with the second and
third lines of \eqref{eq:pde-system} derived above. The first line of
\eqref{eq:fk} is the first line of \eqref{eq:pde-system}.

Starting from $v_N\equiv1$, problem \eqref{eq:fk} is solved for
$i=N-1,\dots,0$, each step using the already-computed $v_{i+1}$ as lateral
boundary data on $\partial U_{i+1}$. The sequential safety probability is
$\Prob^{x_0}\bigl[\tau_N\leq T\bigr]=v_0(0,x_0)$.
\end{proof}

\section{Unordered Sequential Safety}
\label{sec:diffusion-unordered-safety}

We now consider the alternative specification where a system must visit all
$N$ target regions in \emph{any order} (rather than a prescribed sequence) by
time $T$. %This problem requires tracking which regions have been visited, rather than the index of the next region.

%\begin{assumption}\label{disjoint_targets}
%The target sets $U_1,\ldots,U_N$ are pairwise disjoint.
%\end{assumption}

Let $\mathscr{N}:=\{1,2,\cdots,N\}$ denotes the set of all possible modes. We define the visit set
\begin{equation}
  S_t=\big\{k\in \mathscr{N}:\ X_s\in U_k \text{ for some } s\in[0,t]\big\},
  \label{eq:visit_set}
\end{equation}
which records the target regions visited up to time $t$. The process
$(S_t)_{t\ge 0}$ is nondecreasing and piecewise constant, with
$S_{0}=\emptyset$ whenever $x_0\notin\bigcup_{k}U_k$. Since each $U_k$ is
closed (Assumption~\ref{closed_targets_reg}) and $(X_t)_{t\geq 0}$ has continuous paths,
$S_t$ changes exactly at the first entrance times into the regions $U_k$.  Let
$\hat{\mathcal{G}}_t=\sigma\big((X_s,S_s):\ 0\le s\le t\big)$ denote the natural
filtration of the visit-augmented process.

\begin{definition}[Probabilistic unordered-sequential safety]\label{def:prob-unordered-safe}
For a given initial state $x_0$ and a finite time-horizon $T$, we call system \eqref{eq:sde} to be \emph{probabilistically unordered-sequentially safe} with respect to a prescribed threshold $p \in [0,1]$ if
\begin{equation}\label{eq:safety-cond_2}
   % \Prob^{x_0}\bigl[\UST(\trajj)\bigr] \;\leq\; p.
  \Prob^{x_0}\bigl[ S_T=\mathscr{N} \bigr] \;\leq\; p.
\end{equation}
\end{definition}
\noindent
The verification problem we investigate in this section is to compute the probability $\Prob^{x_0}\bigl[ S_T=\mathscr{N} \bigr]$.

  For each subset $H \in \mathcal{P}(\mathscr{N})$, define the value function
\begin{equation}
  v_H(t,x) = \mathbb{P}\left(S_T = \mathscr{N} \,\big|\, S_t = H, X_t = x\right),
  \label{eq:unordered_value}
\end{equation}
which gives the probability that all $N$ regions are visited by time $T$, given that the subset $i$ has been visited by time $t$ and the state is $x$.

\begin{comment}
\begin{theorem}[Unordered Sequential-Passage Safety]
\label{thm:unordered}

Under Assumption \ref{closed_targets_reg}, the value functions $\{v_i : i \in \mathcal{P}(\{1,\ldots,N\})\}$ satisfy the coupled backward parabolic system
\begin{equation}
\begin{cases}
\dfrac{\partial v_i}{\partial t} + \mathcal{L} v_i = 0, 
  \quad (t, x) \in [t_0, T) \times \left(\mathcal{X} \setminus \bigcup_{k \notin i} U_k\right), \\[0.2cm]
v_i(t,x) = v_{i \cup \{k\}}(t,x), 
  \quad x \in \partial U_k, \quad k \notin i, \quad t \in [t_0,T], \\[0.2cm]
v_i(T,x) = \begin{cases} 
  1 & \text{if } i = \{1,\ldots,N\}, \\ 
  0 & \text{otherwise},
\end{cases} \\[0.2cm]
v_{\{1,\ldots,N\}}(t,x) = 1, \quad \forall (t,x) \in [t_0,T] \times \mathcal{X}.
\end{cases}
\label{eq:unordered_pde}
\end{equation}

The unordered safety probability is
\begin{equation}
%\mathbb{P}_{x_0}[\text{visit all regions by time } T] = v_{\emptyset}(t_0, x_0),
\mathbb{P}^{x_0}[S_T = \{1,\ldots,N\}] = v_{\emptyset}(t_0, x_0),
\quad x_0 \in \mathcal{X}_0.
\label{eq:unordered_safety}
\end{equation}
\end{theorem}
\end{comment}

\begin{theorem}[Unordered Sequential Safety]
\label{thm:unordered}
Under Assumption~\ref{closed_targets_reg}, the each value function $v_H$ where $H \in \mathcal{P}(\mathscr{N})$ satisfies the coupled backward parabolic system
\small
\begin{equation}
\begin{cases}
\dfrac{\partial v_H}{\partial t} + \mathcal{L} v_H = 0,
  \quad (t, x) \in [0, T) \times \Big(\mathcal{X} \setminus \bigcup_{k \notin H} U_k\Big), \\[0.2cm]
v_H(t,x) = v_{H \cup \{k\}}(t,x),
  \quad x \in \partial U_k, \quad k \notin H, \quad t \in [0,T], \\[0.2cm]
v_H(T,x) = 0,
  \quad x \in \mathcal{X} \setminus \bigcup_{k \notin H} U_k, \\[0.2cm]
v_{\mathscr{N}}(t,x) = 1, \quad \forall (t,x) \in [0,T] \times \mathcal{X}.
\end{cases}
\label{eq:unordered_pde}
\end{equation}
\normalsize
The unordered safety probability is
\begin{equation*}
\mathbb{P}^{x_0}[S_T = \mathscr{N}] = v_{\emptyset}(0, x_0),
\quad x_0 \in \mathcal{X}_0.
\label{eq:unordered_safety}
\end{equation*}
\end{theorem}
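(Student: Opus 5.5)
We mirror the proof of Theorem~\ref{thm1}, replacing the linear mode counter by the visit set $S_t$ and the single next target $U_{i+1}$ by the union $\bigcup_{k\notin H}U_k$ of still-unvisited targets. The steps are: filtrations and strong Markov property, stopping-time representation, boundary and terminal data, Feynman--Kac, then backward induction.

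\textbf{Filtrations and strong Markov property.} We first establish the analogues of Lemma~\ref{lem:filtrations} and Lemma~\ref{stron_markov_lem} for $(X_t,S_t)$. Write $S_t=\{k:\sigma_k\le t\}$ with $\sigma_k=\inf\{t\ge0: X_t\in U_k\}$. Each $\sigma_k$ is a first entrance time of a continuous process into a closed set, hence an $\mathcal{F}_t$-stopping time. Therefore $\hat{\mathcal{G}}_t=\mathcal{F}_t$. The future of $(X,S)$ after a stopping time $\tau$ is then a deterministic functional of $S_\tau$ and $(X_{\tau+s})_{s\ge0}$. Conditioning on each event $\{S_\tau=H\}$, $H\in\mathcal{P}(\mathscr{N})$, and summing as in Lemma~\ref{stron_markov_lem} gives the strong Markov property of $(X_t,S_t)$ on $\cX\times\mathcal{P}(\mathscr{N})$.

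\textbf{Stopping-time representation.} Fix $H\neq\mathscr{N}$, set $D_H:=\cX\setminus\bigcup_{k\notin H}U_k$, and let $\rho_H:=\inf\{s\ge t: X_s\notin D_H\}$. The set $\bigcup_{k\notin H}U_k$ is a finite union of closed sets, so it is closed. For $x\in D_H$, continuity of paths then gives $X_{\rho_H}\in\partial U_k$ for some $k\notin H$ on $\{\rho_H<\infty\}$. The event $S_T=\mathscr{N}$ requires $\rho_H\le T$, because some unvisited target must still be hit. Conditioning on $\hat{\mathcal{G}}_{\rho_H}$ and applying the strong Markov property as in \eqref{eq:markov-restart} yields
\[
v_H(t,x)=\mathbb{E}\bigl[v_{S_{\rho_H}}(\rho_H,X_{\rho_H})\,\mathbf{1}_{\{\rho_H\le T\}}\mid X_t=x,\ S_t=H\bigr],
\]
where $S_{\rho_H}=H\cup\{k: X_{\rho_H}\in U_k\}$.

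\textbf{Boundary and terminal data.} The boundary values require a consistency argument. If $x\in\partial U_k$ for some $k\notin H$, the visit set jumps instantly, so $v_H(t,x)=v_{H\cup K}(t,x)$ with $K=\{j\notin H: x\in U_j\}\ni k$. We must check that this equals $v_{H\cup\{k\}}(t,x)$ as stated. We argue this by the same instantaneous-switching identity \eqref{eq:switch}, applied at the level of $H\cup\{k\}$. Starting from $x$ with visit set $H\cup\{k\}$, the set also jumps immediately to $H\cup K$, so $v_{H\cup\{k\}}(t,x)=v_{H\cup K}(t,x)$. This is where Assumption~\ref{closed_targets_reg} is used. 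The terminal condition follows as in Theorem~\ref{thm1}: for $x\in D_H$ at $t=T$ we have $\rho_H>T$, so $v_H(T,x)=0$. The identity $v_{\mathscr{N}}\equiv1$ is immediate.

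\textbf{Feynman--Kac and induction.} Define $\psi_H$ as $v_{H\cup\{k\}}$ on $[0,T]\times\partial U_k$ for $k\notin H$, and $0$ on $\{T\}\times D_H$. This is well defined by the consistency just shown. The representation then reads $v_H(t,x)=\mathbb{E}[\psi_H(\rho_H\wedge T,X_{\rho_H\wedge T})]$. Uniform ellipticity from Assumption~\ref{ass:wellposed} gives $\mathbb{P}[\rho_H=T]=0$ for $t<T$, exactly as in Theorem~\ref{thm1}, so the corner discontinuity of $\psi_H$ is irrelevant. The Feynman--Kac representation for exit problems \cite{lawler2010stochastic} then shows that $v_H$ solves \eqref{eq:unordered_pde} on $[0,T)\times D_H$.

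The system is solved by backward induction on $|H|$. We start at $H=\mathscr{N}$; when treating $H$, all $v_{H\cup\{k\}}$ are already determined. Taking $H=\emptyset$ at time $0$ gives $\mathbb{P}^{x_0}[S_T=\mathscr{N}]=v_\emptyset(0,x_0)$. This also requires $x_0\notin\bigcup_k U_k$ so that $S_0=\emptyset$; otherwise we start from $v_{S_0}$. The main obstacle is the boundary-consistency step when targets overlap, namely points lying in several $U_k$ simultaneously. The plan is to handle it via the instantaneous multi-jump identity above rather than through any disjointness assumption.
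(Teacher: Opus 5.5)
Your proposal is correct and follows essentially the same route as the paper's proof. Both use the strong Markov property of $(X_t,S_t)$ and restart at the first hitting time of $\bigcup_{k\notin H}U_k$. Both use the instantaneous-switching identity $v_H=v_{H\cup\{k\}}$ on $U_k$ to make the boundary data well defined when targets overlap, then apply the exit-time Feynman--Kac representation with the null corner $\{\rho_H=T\}$, and finish by backward induction from $v_{\mathscr{N}}\equiv 1$. The differences are cosmetic. You restart with $v_{S_{\rho_H}}$ rather than the paper's smallest-index choice $v_{H\cup\{\hat{k}\}}$; these agree by the same identity, which also gives $v_\emptyset(0,x_0)=v_{S_0}(0,x_0)$, so your caveat about $x_0\in\bigcup_k U_k$ is unnecessary. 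You also spell out the filtration and strong-Markov analogues that the paper only invokes as similar to Lemma~\ref{stron_markov_lem}.
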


\begin{proof}
By definition, $v_H(t,x) = \mathbb{P}(S_T = \mathscr{N} \mid S_t = H, X_t = x)$
is the probability that all $N$ regions are visited by time $T$, starting from
state $(t,x)$ with already-visited subset $H$. Since $S_t$ records the regions
visited up to and including time $t$, every region containing $x$ is credited
at time $t$.

\noindent
If the process has visited all regions, then the objective is
met, so $v_{\mathscr{N}}(t,x) = 1$ for all
$(t,x) \in [0,T] \times \mathcal{X}$.

\noindent
Now let $H \subsetneq \mathscr{N}$. If $x \in U_k$ for some $k \notin H$, region
$k$ is entered at time $t$, so starting from $(t,x)$ with visited subset $H$
and starting from $(t,x)$ with visited subset $H \cup \{k\}$ describe the same
situation. Hence
\begin{equation}
v_H(t,x) = v_{H \cup \{k\}}(t,x), \quad x \in U_k, \quad k \notin H.
\label{eq:unordered_bc}
\end{equation}
Since the target regions need not be disjoint, $x$ may lie in several
unvisited regions; all these conditions then give the same value, since each
region containing $x$ is credited at time $t$ in every case. For the stage
with visited subset $H$ itself only $x\in\partial U_k$ matters, which gives
the second line of \eqref{eq:unordered_pde}. Relation \eqref{eq:unordered_bc}
is nevertheless stated on all of $U_k$, because a boundary point of one region
may lie in the interior of another, where the value of $v_{H\cup\{k\}}$ is not
determined by its own PDE but by \eqref{eq:unordered_bc}.

\noindent
Let $D_H := \mathcal{X} \setminus \bigcup_{k \notin H} U_k$ and let
\begin{equation}
  \hat{\tau} = \inf\Big\{s \geq t : X_s \in \bigcup_{k \notin H} U_k\Big\}
\end{equation}
be the first hitting time of the union of unvisited regions, that is, the
first exit time of $(X_s)_{s\ge t}$ from $D_H$. Under
Assumption~\ref{closed_targets_reg}, this union is closed, and since
$(X_t)$ has continuous paths, $X_{\hat{\tau}} \in \partial D_H$ on
$\{\hat{\tau} < \infty\}$. In particular, $X_{\hat{\tau}}$ lies in at least
one unvisited region. Let $\hat{k} \notin H$ be the smallest
index with $X_{\hat{\tau}} \in U_{\hat{k}}$. If $X_{\hat{\tau}}$ lies in
several unvisited regions, any of them could be chosen, by
\eqref{eq:unordered_bc}. On $\{\hat{\tau} > T\}$ no unvisited region is
entered, so $S_T = H \neq \mathscr{N}$.

\noindent
Similar to Lemma~\ref{stron_markov_lem}, we can show that the subset-augmented
process $(X_t, S_t)_{t \geq t_0}$ is a strong Markov process. Conditioning on
$\hat{\mathcal{G}}_{\hat{\tau}}$ and applying the strong Markov property together
with \eqref{eq:unordered_bc} at $X_{\hat{\tau}} \in U_{\hat{k}}$, we obtain, on
$\{\hat{\tau} \leq T\}$,
\begin{equation}
  \mathbb{P}\big(S_T = \mathscr{N} \mid \hat{\mathcal{G}}_{\hat{\tau}}\big)
  = v_{H \cup \{\hat{k}\}}(\hat{\tau}, X_{\hat{\tau}}).
\end{equation}
Hence for $x \in D_H$,
\small
\begin{equation}
\begin{split}
 & v_H(t,x)  \\
  &= \mathbb{E}\left[\mathbf{1}_{\{\hat{\tau} \leq T\}} \cdot \mathbb{P}\big(S_T = \mathscr{N} \mid \hat{\mathcal{G}}_{\hat{\tau}}\big)
   \mid X_t=x, S_t=H \right]\\
  &= \mathbb{E}\left[\mathbf{1}_{\{\hat{\tau} \leq T\}} \cdot
  v_{H \cup \{\hat{k}\}}(\hat{\tau}, X_{\hat{\tau}}) \mid X_t=x, S_t=H \right].
  \label{eq:unordered_decomp}
  \end{split}
\end{equation}
\normalsize

\noindent
If $x \in D_H$ and $t = T$, then $X_T = x$ lies outside the closed set
$\bigcup_{k \notin H} U_k$, so $\hat{\tau} > T$ and
\eqref{eq:unordered_decomp} gives $v_H(T,x) = 0$.

\noindent
To obtain the PDE, define the boundary data
\small
\begin{equation}
  \psi_H(s,y):=
  \begin{cases}
    v_{H\cup\{k\}}(s,y), & s\in[t_0,T], y\in\partial D_H\cap U_k, k\notin H,\\[2pt]
    0,                   & s=T,\ y\in D_H ,
  \end{cases}
  \label{eq:psi_H}
\end{equation}
\normalsize
which is well defined: by \eqref{eq:unordered_bc}, all admissible $k$ give the
same value when $y$ lies in several unvisited regions. On
$\{\hat{\tau}\le T\}$ the stopped state is $(\hat{\tau},X_{\hat{\tau}})$ with
$X_{\hat{\tau}}\in\partial D_H\cap U_{\hat{k}}$. On $\{\hat{\tau}>T\}$ it is
$(T,X_T)$ with $X_T\in D_H$, where the integrand of
\eqref{eq:unordered_decomp} vanishes. Hence \eqref{eq:unordered_decomp} can be
written as the exit-time expectation
\begin{equation}
  v_H(t,x)=\mathbb{E}\bigl[\psi_H\bigl(\hat{\tau}\wedge T,\,
           X_{\hat{\tau}\wedge T}\bigr)\mid X_t=x,\ S_t=H\bigr].
  \label{eq:unordered_exit_rep}
\end{equation}
Under Assumption~\ref{ass:wellposed} the diffusion is non-degenerate, so
$\mathbb{P}[\hat{\tau}=T\mid X_t=x,S_t=H]=0$ for $t<T$. The stopped state
therefore almost surely avoids the corner $\{T\}\times\partial D_H$, where
$\psi_H$ may be discontinuous. By the Feynman--Kac representation for
diffusions stopped at the exit from a domain \cite{lawler2010stochastic},
the function \eqref{eq:unordered_exit_rep} is the unique bounded solution of
the terminal--boundary value problem
\begin{equation}
\left\{
\begin{aligned}
  &\frac{\partial v_H}{\partial t}+\mathcal{L}v_H=0,
     && (t,x)\in[0,T)\times D_H,\\
  &v_H(t,x)=v_{H\cup\{k\}}(t,x), && t\in[0,T],\ x\in\partial U_k,\ k\notin H,\\
  &v_H(T,x)=0, && x\in D_H.
\end{aligned}
\right.
\label{eq:unordered_fk}
\end{equation}
The boundary and terminal data in \eqref{eq:unordered_fk} are the values of
$\psi_H$ in \eqref{eq:psi_H}, which coincide with the second and third lines of
\eqref{eq:unordered_pde} derived above. The first line of
\eqref{eq:unordered_fk} is the first line of \eqref{eq:unordered_pde}.

\noindent
Since the boundary data for $v_H$ involve only $v_{H \cup \{k\}}$, which has
one more visited region, the system can be solved backward starting from
$v_{\mathscr{N}} \equiv 1$.

\noindent
By definition, starting from the empty visited set at time $0$,
$$\mathbb{P}^{x_0}[S_T=\mathscr{N}] = v_\emptyset(0, x_0), \quad x_0 \in \mathcal{X}_0.$$
\end{proof}

\begin{remark}
Theorem~\ref{thm:unordered} provides a systematic method to compute the unordered 
sequential-passage safety probability by solving a coupled system of $2^{N-1}$ PDEs, one 
for each subset $i \subseteq \{1, \ldots, N\}$. In contrast, a naive approach based on Theorem~\ref{thm1} would independently 
solve PDEs for each of the $N!$ possible visit orderings. 
%Beyond the exponential versus factorial distinction, the coupled structure in Theorem~\ref{thm:unordered}, particularly the boundary conditions linking adjacent subsets, allows for potential computational leverage: solutions for smaller subsets can inform those for larger ones, enabling backward-induction-like solution strategies that are not available when solving ordered sequences independently. %Thus Theorem~\ref{thm:unordered} is not merely a theoretical characterization but also a computationally tractable alternative to naive enumeration of orderings.
\end{remark}

\section{Numerical Example}
\label{sec:numerical}

In this section, we apply Theorem~\ref{thm1} to a longitudinal dynamics of a vehicle
traveling along a straight road, where the specification of interest couples a
speed-limit violation with a subsequent obstacle encounter. 

%\subsection{State and Dynamics}
Let $D_t \in \mathbb{R}$ denote the vehicle position and $V_t \in \mathbb{R}$ its
velocity, with state $X_t = (D_t, V_t) \in \mathbb{R}^2$. The vehicle evolves
according to
\small
\begin{align}
  \mathrm{d}D_t &= V_t \, \mathrm{d}t + \sigma_d \, \mathrm{d}W^1_t,
                   \label{eq:vehicle_position}\\
  \mathrm{d}V_t &= (u_t - \alpha V_t) \, \mathrm{d}t + \sigma_v \, \mathrm{d}W^2_t,
                   \label{eq:vehicle_velocity}
\end{align}
\normalsize
where $W^1$ and $W^2$ are independent standard Brownian motions, $\alpha > 0$ is
the friction and damping coefficient aggregating rolling resistance and
aerodynamic drag, $\sigma_d, \sigma_v > 0$ are noise intensities, and $u_t$ is the
acceleration input.

The perturbation $\sigma_v \, \mathrm{d}W^2_t$ models disturbances entering through
the powertrain and the tyre--road interface: gusts, grade variation and actuation
noise. The perturbation $\sigma_d \, \mathrm{d}W^1_t$ accounts for the fact that
the realised longitudinal displacement is not the exact integral of the velocity
state. Longitudinal wheel slip and odometry drift accumulate as a random walk in
the position channel, so \eqref{eq:vehicle_position} is an approximate rather than
an exact integrator. %Retaining this term is both physically faithful and, as Section~\ref{subsec:nondegeneracy} records, analytically convenient.

%\subsection{Velocity Tracking Controller}
We close the loop with the proportional velocity-tracking law
\small
\begin{equation}
  u_t = k(V_{\mathrm{ref}} - V_t), \qquad k > 0,
  \label{eq:vehicle_controller}
\end{equation}
\normalsize
where $V_{\mathrm{ref}}$ is the reference velocity and $k$ the feedback gain.
Substituting \eqref{eq:vehicle_controller} into \eqref{eq:vehicle_velocity} gives
\small
\begin{equation}
  \mathrm{d}V_t = \big[k V_{\mathrm{ref}} - (k+\alpha) V_t\big] \mathrm{d}t
                  + \sigma_v \, \mathrm{d}W^2_t .
  \label{eq:vehicle_cl}
\end{equation}
\normalsize
Since $-(k+\alpha) < 0$, the noise-free closed loop converges exponentially to
\small
\begin{equation}
  V_* = \frac{k}{k+\alpha} V_{\mathrm{ref}}.
  \label{eq:vehicle_equilibrium}
\end{equation}
\normalsize
%and $V_t$ is an Ornstein--Uhlenbeck process with stationary standard deviation $\sigma_v / \sqrt{2(k+\alpha)}$. Both quantities matter for the design of the example: a speed-limit threshold is reachable only through noise excursions above $V_*$, and is realised on the horizon $[0,T]$ only if it sits within a few stationary standard deviations of $V_*$.

%\subsection{Non-degeneracy of the Diffusion}
%\label{subsec:nondegeneracy}
Writing \eqref{eq:vehicle_position} and \eqref{eq:vehicle_cl} as
$\mathrm{d}X_t = \mu(X_t)\,\mathrm{d}t + \Sigma \, \mathrm{d}W_t$ with
\small
\begin{equation}
X_t = \begin{bmatrix} D_t \\ V_t \end{bmatrix}, \
  \mu(d,v) = \begin{bmatrix} v \\ k V_{\mathrm{ref}} - (k+\alpha)v \end{bmatrix},
  \ 
  \Sigma = \begin{bmatrix} \sigma_d & 0 \\ 0 & \sigma_v \end{bmatrix},
\end{equation}
\normalsize
the diffusion matrix is $a = \Sigma \Sigma^{\!\top} = \operatorname{diag}(\sigma_d^2, \sigma_v^2)$.
For $\sigma_d, \sigma_v > 0$,
\small
\begin{equation*}
  \xi^{\!\top} a \, \xi \; \ge \; \min\{\sigma_d^2, \sigma_v^2\} \, \|\xi\|^2
  \qquad \text{for all } \xi \in \mathbb{R}^2,
  \label{eq:uniform_ellipticity}
\end{equation*}
\normalsize
so the generator is uniformly elliptic on $\mathbb{R}^2$. %This contrasts with the kinematically exact model $\sigma_x = 0$, for which $a = \operatorname{diag}(0, \sigma_v^2)$ is singular and ellipticity holds only in the hypoelliptic sense, through the bracket $[\,\mu, \Sigma_v\,]$ spanning the position direction. 
%Under \eqref{eq:uniform_ellipticity} the Dirichlet problems generated by Theorem~\ref{thm1} admit unique classical solutions on each stage domain, and the Feynman--Kac representation holds classically rather than in the viscosity sense.

The generator of $X_t$ is
\small
\begin{equation}
  \mathcal{L}\phi = v \phi_d + \big[k V_{\mathrm{ref}} - (k+\alpha)v\big] \phi_v
      + \tfrac{\sigma_d^2}{2} \phi_{dd} + \tfrac{\sigma_v^2}{2} \phi_{vv}.
  \label{eq:vehicle_generator}
\end{equation}
\normalsize
%If $W^1$ and $W^2$ are correlated with coefficient $\rho$, then \eqref{eq:vehicle_generator} acquires the cross term $\rho \sigma_x \sigma_v \phi_{xv}$ and $a$ remains positive definite precisely when $|\rho| < 1$; we take $\rho = 0$ below.

\subsection*{Sequential Safety Specification:}
We consider two target regions,
\small
\begin{align}
  U_1 &= \big\{(d,v) \in \mathbb{R}^2 : v \ge v_{\max}\big\},
        \label{eq:unsafe_1}\\
  U_2 &= \big\{(d,v) \in \mathbb{R}^2 : d_{\mathrm{obs,min}} \le d \le d_{\mathrm{obs,max}}\big\},
        \label{eq:unsafe_2}
\end{align}
\normalsize
a high-speed violation zone and a static obstacle zone. %with safe set $\mathcal{S} = \mathbb{R}^2 \setminus (U_1 \cup U_2)$. 
Both sets are closed, as required by Assumption~\ref{closed_targets_reg}.

The quantity of interest is the probability of the \emph{ordered} passage: the
vehicle violates the speed limit and only subsequently reaches the obstacle,
\small
\begin{equation}
  v_0(0,x_0) = \mathbb{P}^{x_0}\!\big(\tau_1 \le \tau_2 \le T\big).
  \label{eq:vehicle_quantity}
\end{equation}
\normalsize

By Theorem~\ref{thm1} with $N = 2$, this is obtained from the two-stage recursion
%By Theorem~\ref{thm1} with $N = 2$, this is obtained from the two-stage recursion

\small
\begin{align}
  \partial_t v_1 + \mathcal{L} v_1 &= 0 \quad \text{on } [0,T) \times (\mathcal{X} \setminus U_2), \nonumber\\
  v_1(t,\cdot) &= 1 \ \text{on } U_2, \quad v_1(T,\cdot) = 0 \ \text{on } \mathcal{X} \setminus U_2, \label{eq:stage1}
\end{align}
\normalsize
followed by
\small
\begin{align}
  \partial_t v_0 + \mathcal{L} v_0 &= 0 \quad \text{on } [0,T) \times (\mathcal{X} \setminus U_1), \nonumber\\
  v_0(t,\cdot) &= v_1(t,\cdot) \ \text{on } U_1, \quad v_0(T,\cdot) = 0 \ \text{on } \mathcal{X} \setminus U_1. \label{eq:stage0}
\end{align}
\normalsize

\vspace{-0.2cm}
\subsection*{Numerical Method:}
%\vspace{-0.2cm}
\label{subsec:numerics}
 
%Each stage of the recursion is a linear parabolic Dirichlet problem for the generator \eqref{eq:vehicle_generator}, and we solve it by the standard locally consistent finite-difference (Markov chain approximation) scheme of Kushner and Dupuis \cite{kushner2001}. Transport terms use hybrid central/upwind differencing \cite{spalding1972}, so that the discrete generator has non-negative off-diagonal entries and zero row sums; the scheme is therefore monotone and the computed values remain in $[0,1]$. Time stepping is implicit Euler on the truncated domain $[-1,4] \times [-0.5,2]$ with reflecting outer edges, and the Dirichlet data are imposed on the system matrix rather than on the right-hand side alone.

%Each stage is a linear parabolic Dirichlet problem for \eqref{eq:vehicle_generator}, solved with the Markov chain approximation scheme of \cite{kushner2001} on $[-1,4]\times[-0.5,2]$. Hybrid central/upwind differencing \cite{spalding1972} gives a discrete generator with non-negative off-diagonal entries and zero row sums, so the scheme is monotone and the values remain in $[0,1]$. We use implicit Euler time stepping and reflecting outer edges, and impose the Dirichlet data through identity rows in the system matrix.

Each stage is a parabolic PDE with Dirichlet data, solved on the box
$[-1,4]\times[-0.5,2]$ by replacing the diffusion with a Markov chain on
a grid whose jump rates match its drift and noise \cite{kushner2001}.
The rates are constructed to be non-negative, so each computed value is a
weighted average of its neighbours and therefore stays in $[0,1]$, as a
probability must. Time stepping is implicit Euler, stable for any step
size; the chain cannot leave the box, and target nodes are fixed to their
boundary values directly in the system matrix.

Two features of Theorem~\ref{thm1} shape the implementation. First, the
boundary values for stage~$0$ on $U_1$ are the stage-$1$ solution $v_1$,
which changes in time, so $v_1$ is stored at every time step rather than
only at the final one. Second, the matrix solved at each time step does not
change within a stage, because the dynamics, grid and step size are fixed;
it is therefore factorised once per stage, and each time step
reduces to a cheap forward--backward substitution.
 
%Only two points are specific to Theorem~\ref{thm1}. First, the stage-$0$ problem \eqref{eq:stage0} inherits \emph{time-dependent} boundary data, namely the space--time trace of $v_1$ on $U_1$, so the stage-$1$ solution is retained over the whole time grid rather than at the terminal slice alone. Second, both stages share the same generator and the same grid, so a single sparse LU factorization of $I - \Delta t\,\mathcal{L}_\Delta$ is reused in all time steps of each stage.
 
As an independent reference we estimate $v_0(0,x_0)$ by Euler--Maruyama
simulation. First-passage events are corrected for discrete-monitoring bias by
Brownian-bridge crossing probabilities \cite{gobet2000}, in the velocity
channel for $U_1$ and in the position channel for $U_2$.
%the latter is needed precisely because $\sigma_x > 0$ makes $X$ a diffusion rather than a $C^1$ path.
\vspace{-0.2cm}
\subsection*{Results:}
\vspace{-0.2cm}
Parameters are $k = 2$, $\alpha = 0.5$, $V_{\mathrm{ref}} = 1$, hence
$V_* = 0.8$; noise intensities $\sigma_v = 0.5$ and $\sigma_d = 0.3$, giving a
stationary velocity deviation of $0.224$; obstacle zone
$[d_{\mathrm{obs,min}}, d_{\mathrm{obs,max}}] = [1.5, 2.0]$; horizon $T = 4$; and
initial condition $x_0 = \begin{bmatrix} 0 \\ 0.8 \end{bmatrix}$, that is, the vehicle starts at the
closed-loop equilibrium speed.

%Table~\ref{tab:vehicle_convergence} reports the grid convergence of $v_0(0,z_0)$ for $v_{\max} = 1.2$. The value is stable to four significant figures across a fourfold refinement in each coordinate.

%% ===================================================================
%%  Table I (grid convergence) + accompanying text, corrected.
%%  Fixes: (i) N_x, N_v, N_t defined at first use;
%%         (ii) v_0(0,x_0) -> v_0(0,z_0) in caption and column header;
%%         (iii) "fourfold refinement in each coordinate" was inaccurate
%%               (N_x x3.97, N_v x2.49, N_t x2.67) -- replaced by the
%%               total node count.
%% ===================================================================

%Table~\ref{tab:vehicle_convergence} reports the grid convergence of $v_0(0,x_0)$ for $v_{\max} = 1.2$. Here $N_d$ and $N_v$ denote the number of grid points in the position and velocity directions, respectively, and $N_t$ the number of time steps, so that the mesh spacings are $h_d = (d_{\mathrm{hi}} - d_{\mathrm{lo}})/(N_d - 1)$ and $h_v = (v_{\mathrm{hi}} - v_{\mathrm{lo}})/(N_v - 1)$, and $\Delta t = T/N_t$. The admissible resolutions are constrained by the requirement that $v_{\max}$ and the obstacle edges lie exactly on grid nodes at every level, which forces $N_d \in \{101, 201, 301, 401, \dots\}$ and $N_v \in \{101, 126, 176, 251, \dots\}$; snapping the thresholds to the nearest node instead would perturb the specification itself between refinement levels.

Table~\ref{tab:vehicle_convergence} reports the grid convergence of
$v_0(0,x_0)$ for $v_{\max} = 1.2$. Here $N_d$ and $N_v$ denote the number of
grid points in the position and velocity directions, respectively, so that
the mesh spacings are $h_d = (d_{\mathrm{hi}} - d_{\mathrm{lo}})/(N_d - 1)$ and
$h_v = (v_{\mathrm{hi}} - v_{\mathrm{lo}})/(N_v - 1)$; time is discretised by
implicit Euler with $\Delta t = T/N_t$. Since implicit Euler is first order in
time, each entry is Richardson-extrapolated from $N_t = 2400$ and $N_t = 1200$,
so that the table isolates the spatial discretisation error; the time error at
$N_t = 2400$ is $8\times10^{-5}$ on every grid. The resolutions are chosen such
that $x_0$, $v_{\max}$ and the obstacle edges lie exactly on grid nodes at
every level, which requires $N_d = 10m+1$ and $N_v = 25p+1$ for integers
$m, p \ge 1$; snapping the thresholds to the nearest node instead would perturb
the specification itself between refinement levels. The values decrease
monotonically under refinement, and the $301\times176$ grid used in
Table~\ref{tab:vehicle_validation} differs from the finest level by
$1\times10^{-4}$.

%The computed value is stable to four significant figures as the node count $N_x N_v$ increases from $1.0 \times 10^4$ to $1.0 \times 10^5$.

\begin{comment}
\begin{table}[t]
\centering
\caption{Grid convergence of $v_0(0,x_0)$ for $v_{\max} = 1.2$. Thresholds lie
exactly on grid nodes at every level.}
\label{tab:vehicle_convergence}
\begin{tabular}{ccc}
\toprule
$N_d \times N_v$ & $N_t$ & $v_0(0,x_0)$ \\
\midrule
$101 \times 101$ & 150 & 0.60144 \\
$201 \times 126$ & 200 & 0.60138 \\
$301 \times 176$ & 300 & 0.60153 \\
$401 \times 251$ & 400 & 0.60160 \\
\bottomrule
\end{tabular}
\end{table}
\end{comment}

\begin{table}[t]
\centering
\caption{Grid convergence of $v_0(0,x_0)$ for $v_{\max} = 1.2$. Thresholds lie exactly on grid nodes at every level.}
\label{tab:vehicle_convergence}
\begin{tabular}{cccc}
\toprule
$N_d \times N_v$ & $h_d$ & $h_v$ & $v_0(0,x_0)$ \\
\midrule
$101 \times 101$ & $1/20$ & $1/40$  & 0.60279 \\
$201 \times 126$ & $1/40$ & $1/50$  & 0.60239 \\
$301 \times 176$ & $1/60$ & $1/70$  & 0.60219 \\
$401 \times 251$ & $1/80$ & $1/100$ & 0.60209 \\
\bottomrule
\end{tabular}
\end{table}

%Table~\ref{tab:vehicle_validation} compares both stage values against Monte Carlo as the speed limit $v_{\max}$ is varied, which sweeps the stringency of the ordering requirement while leaving the obstacle specification untouched. The PDE values use the $301 \times 176$ grid with $N_t = 300$; the Monte Carlo estimates use $2 \times 10^5$ paths with $\Delta t = 10^{-3}$ and are reported with $95\%$ confidence half-widths. 

Table~\ref{tab:vehicle_validation} compares $v_0(0,x_0)$ with Monte Carlo as
$v_{\max}$ varies, which tightens the ordering requirement while leaving the
obstacle unchanged. The PDE uses the $301 \times 176$ grid with Richardson
extrapolation in time ($N_t = 2400/1200$); Monte Carlo uses $10^5$
Euler--Maruyama paths ($\Delta t = 10^{-3}$) with Brownian-bridge corrections
and $95\%$ confidence intervals. In every case the PDE value lies within the
confidence interval, with relative differences below $1\%$, although the
probability itself falls from $0.602$ to $0.041$.

\begin{comment}
\begin{table}[t]
\centering
\caption{Ordered sequential-passage probability $v_0(0,x_0) = \mathbb{P}^{x_0}(\tau_1 \le \tau_2 \le T)$
for the vehicle example: Theorem~\ref{thm1} against Monte Carlo, as a function of
the speed limit $v_{\max}$.}
\label{tab:vehicle_validation}
\begin{tabular}{cccc}
\toprule
$v_{\max}$ & Theorem~\ref{thm1} & Monte Carlo & rel.\ err. \\
\midrule
$1.2$ & 0.60153 & $0.60282 \pm 0.00214$ & $0.21\%$ \\
$1.3$ & 0.32287 & $0.32427 \pm 0.00205$ & $0.43\%$ \\
$1.4$ & 0.13082 & $0.13275 \pm 0.00149$ & $1.45\%$ \\
$1.5$ & 0.04085 & $0.04103 \pm 0.00071$ & $0.44\%$ \\
\bottomrule
\end{tabular}
\end{table}
\end{comment}

\begin{table}[t]
\centering
\caption{Ordered sequential-passage probability $v_0(0,x_0) = \mathbb{P}^{x_0}(\tau_1 \le \tau_2 \le T)$
for the vehicle example.}
\label{tab:vehicle_validation}
\begin{tabular}{cccc}
\toprule
$v_{\max}$ & Theorem~\ref{thm1} & Monte Carlo & rel.\ err. \\
\midrule
$1.2$ & 0.60219 & $0.60556 \pm 0.00303$ & $0.56\%$ \\
$1.3$ & 0.32321 & $0.32555 \pm 0.00290$ & $0.72\%$ \\
$1.4$ & 0.13094 & $0.13137 \pm 0.00209$ & $0.33\%$ \\
$1.5$ & 0.04089 & $0.04074 \pm 0.00123$ & $0.37\%$ \\
\bottomrule
\end{tabular}
\end{table}

\section{Conclusion and Future Work}

We introduced probabilistic sequential safety for stochastic dynamical
systems, which bounds the probability that a trajectory visits a tuple of
closed target regions by a deadline, either in a prescribed order or in any
order. Augmenting the diffusion with a discrete mode that records sequential
progress yields a strong Markov process, whose Feynman--Kac representation
characterises the safety probability through recursively coupled backward
Kolmogorov equations: $N$ for the ordered case and $2^N-1$ for the unordered
case. A longitudinal vehicle example illustrated the results. Future work
includes data-driven efficient
controller synthesis for sequential safety specifications.

 \bibliographystyle{IEEEtran}
 %\bibliography{lcss_ref}
 \bibliography{lcss_ref_updated_2}
\end{document}